
\documentclass[10pt, conference]{IEEEtran}
\IEEEoverridecommandlockouts

\usepackage{times}  
\usepackage{epsfig}
\usepackage[TABBOTCAP]{subfigure}
\usepackage{tabularx}
\usepackage{graphicx} 
\usepackage{color}
\usepackage{xspace}
\usepackage{verbatim}
\usepackage{colortbl}
\usepackage[ruled,linesnumbered]{algorithm2e}


\usepackage{fontenc}
\usepackage[T1]{fontenc}
\usepackage{amsfonts}
\usepackage{amsmath,amssymb}
\usepackage{graphicx}
\usepackage{epsfig}
\usepackage{color}
\usepackage{cite}
\usepackage{epstopdf}
\usepackage{subfigure}
\usepackage{eso-pic}
\usepackage{flushend}
\usepackage{amssymb}
\usepackage{amsmath}
\usepackage{amsthm}

\usepackage{multirow}
\usepackage{hyperref}
\usepackage{cleveref}
\usepackage{booktabs}
\usepackage{bbm}
\usepackage{url}
\newtheorem{prop}{Proposition}





\newif\ifcomment
\commenttrue

\ifcomment

\newcounter{AFNumberOfComments}
\stepcounter{AFNumberOfComments}
\newcommand{\afnote}[1]{\textcolor{blue}{\small AF-\arabic{AFNumberOfComments}\stepcounter{AFNumberOfComments}: #1}}

\newcounter{MGNumberOfComments}
\stepcounter{MGNumberOfComments}
\newcommand{\mgnote}[1]{\textcolor{magenta}{\small MG-\arabic{MGNumberOfComments}\stepcounter{MGNumberOfComments}: #1}}

\newcounter{DRNumberOfComments}
\stepcounter{DRNumberOfComments}
\newcommand{\drnote}[1]{\textcolor{red}{\small DR-\arabic{DRNumberOfComments}\stepcounter{DRNumberOfComments}: #1}}

\newcounter{JRNumberOfComments}
\stepcounter{JRNumberOfComments}
\newcommand{\jrnote}[1]{\textcolor{cyan}{\small JR-\arabic{JRNumberOfComments}\stepcounter{JRNumberOfComments}: #1}}



\else
\newcommand\afnote[1]{}
\newcommand\mgnote[1]{}
\newcommand\drnote[1]{}
\newcommand\jrnote[1]{}
\fi

\newcommand{\APPROX}{\textsc{approx}\xspace}
\newcommand{\CLASS}{\textsc{class}\xspace}

\begin{document}

\title{Accelerating Deep Learning Classification with Error-controlled Approximate-key Caching}


 \author{
 \IEEEauthorblockN{Alessandro Finamore, James Roberts, Massimo Gallo, Dario Rossi}
 \IEEEauthorblockA{HUAWEI Technologies, France}
 }


\pagestyle{plain} 

\maketitle

\begin{abstract}
While Deep Learning (DL) technologies are a promising tool to solve networking problems that map to classification tasks, their computational complexity is still too high with respect to real-time traffic measurements requirements. To reduce the DL inference cost, we propose a novel caching paradigm, that we named  \emph{approximate-key caching}, which returns approximate results for lookups of selected input based on cached DL inference results. While approximate cache hits alleviate DL inference workload and increase the system throughput, they however introduce an \emph{approximation error}.
As such, we couple approximate-key caching with an error-correction principled algorithm, that we named \emph{auto-refresh}.
We analytically model our caching system performance for classic LRU and ideal caches, we perform a trace-driven evaluation of the expected performance, and we compare the benefits of our proposed approach with the state-of-the-art similarity caching -- this testifies the practical interest of our proposal.
\end{abstract}

\section{Introduction}\label{sec:intro}

Edge Artificial Intelligence (AI) is a promising technology for distributing intelligence at all levels of the network and is instrumental to the advent of self-driving networks~\cite{feamster18anrw}. However, the high computational cost associated with AI models impedes their adoption in applications requiring fast analytics. To mitigate this issue, common options include simplifying the architecture of the model (e.g., techniques such as quantization and knowledge distillation reduce the model size at the cost of lower accuracy~\cite{mobilenet-corr17,bnn-NIPS2016,distillation-NIPS15}) and accelerating computation (e.g., in addition to using hardware accelerators~\cite{googleTPU,hwcomparison}, techniques such as early exit and LCNN bypass the computation of certain model layers~\cite{lcnn-CVPR17,laskaridis20mobicom}).


Generally speaking, the key to alleviate DL computations is to reduce resource wastage due to \emph{repeated operations}. In other words, since DL models operate on inputs that are the same as a previous input (e.g., due to skew in the input popularity), it may be beneficial to avoid repeating computations by \emph{caching} the DL model output corresponding to recurring inputs. For example, Twitter observed that ``\emph{an increasing number of cache clusters are devoted to caching computation related data, such as features, intermediate and
final results of Machine Learning (ML) prediction [which accounts for] 50\% of all Twemcache clusters}''~\cite{yang2020osdi}, while Microsoft stated that ``\emph{machine learning algorithms
occupy hundreds of machines for tens of milliseconds to
select the ads for each query}''~\cite{li-WWW18} -- caching is key for DL-based real-time system performance. 
 
While \emph{exact caching} is already popular for DL inference systems~\cite{clipper-NSDI17,laser-WSDM14,velox-CIDR15}, we argue that a more flexible paradigm is required: for similar inputs, it may be advantageous to use cached DL model results as \emph{approximate DL results}. This further reduces the DL inference workload at the cost of an approximation error. Providing cached results for approximately similar queries, i.e., \textit{similarity caching}, is commonplace in content retrieval systems. However, due to the nature of the queries, previous literature in this area~\cite{virage-96,luo2002sigmod,falchi-LSDS08,chierichetti09pods,isax-icdm10,imcom17,lucene-10,pandey-WWW09,li-WWW18,garetto20infocom} is not necessarily suitable for ML/DL \emph{classification} use cases, which are common in the networking domain. 


In this work, we introduce a novel approximate caching technique, called \emph{approximate-key caching}, that improves on \emph{similarity caching}. As for similarity caching, approximate-key caching answers to queries with approximate results. However, approximate-key caching differs from similarity caching in two main aspects: ($i$) it lets the user design an approximation function that is well-adapted to the use case at hand, bringing improved control over the realized hit rate, and ($ii$) it introduces a verification of the key-value pairs stored in the cache using a mechanism we call \emph{auto-refresh} that \emph{explicitly} controls approximation errors. Our contributions are as follows, 
\begin{itemize} 
    \item We introduce \emph{approximate-key} caching, a new caching paradigm retaining the simplicity of an \emph{exact matching} lookup while significantly \emph{increasing the hit rate}.
    \item We design an \emph{auto-refresh} mechanism to automatically trade-off between exploitation (to reduce DL inference workload) and exploration (to verify cached results for error-control). 
    \item We analytically model the error of approximate-key caching, with and without error correction, for LRU replacement (numerical) and an ideal cache (closed-form).
    \item We present a thorough trace-driven evaluation of approximate-key caching, including a comparison with state-of-the-art similarity caching.
\end{itemize}

In the remainder of this paper, we first provide the necessary background on classification and caching (Sec. \ref{sec:background}).
We then introduce approximate-key caching, detailing the auto-refresh error-control mechanism (Sec. \ref{sec:approxkey}) for which we provide an analytical model (Sec. \ref{sec:model}). We next perform a trace-driven evaluation using a real dataset from a traffic classification use case, demonstrating the soundness of the proposed mechanism, and illustrating the benefits it offers compared to both exact caching and similarity caching (Sec. \ref{sec:evaluation}). Finally, we place our work in the context of related literature  (Sec. \ref{sec:related}) and summarize our findings (Sec. \ref{sec:conclusion}).

\section{Classification and caching}\label{sec:background}

We start by formalizing the notion of classification tasks and discussing the use of caching to reduce the compute load. Then, we introduce \textit{exact caching} and \textit{similarity caching}, identify their limitations for classification tasks, and  motivate the need for a novel alternative mechanism.


\subsection{Classification tasks}

We focus on scenarios where an input from an infinite stream needs to be mapped to a class in real-time.
The set of possible classes $\mathcal{C}$ is relatively small (e.g., a few hundred), while the input is a vector $x$ (of discrete or continuous-valued elements) from a very large space $\mathcal{X}$. 
Such a classification task realizes the deterministic mapping $y= \CLASS(x)$ from an input $x\in \mathcal{X}$ to a class $y \in \mathcal{C}$ as depicted in Fig.~\ref{fig:exact-approx-synoptic}-(a). 

\begin{figure}[!t]
    \centering
    \includegraphics[width=\columnwidth]{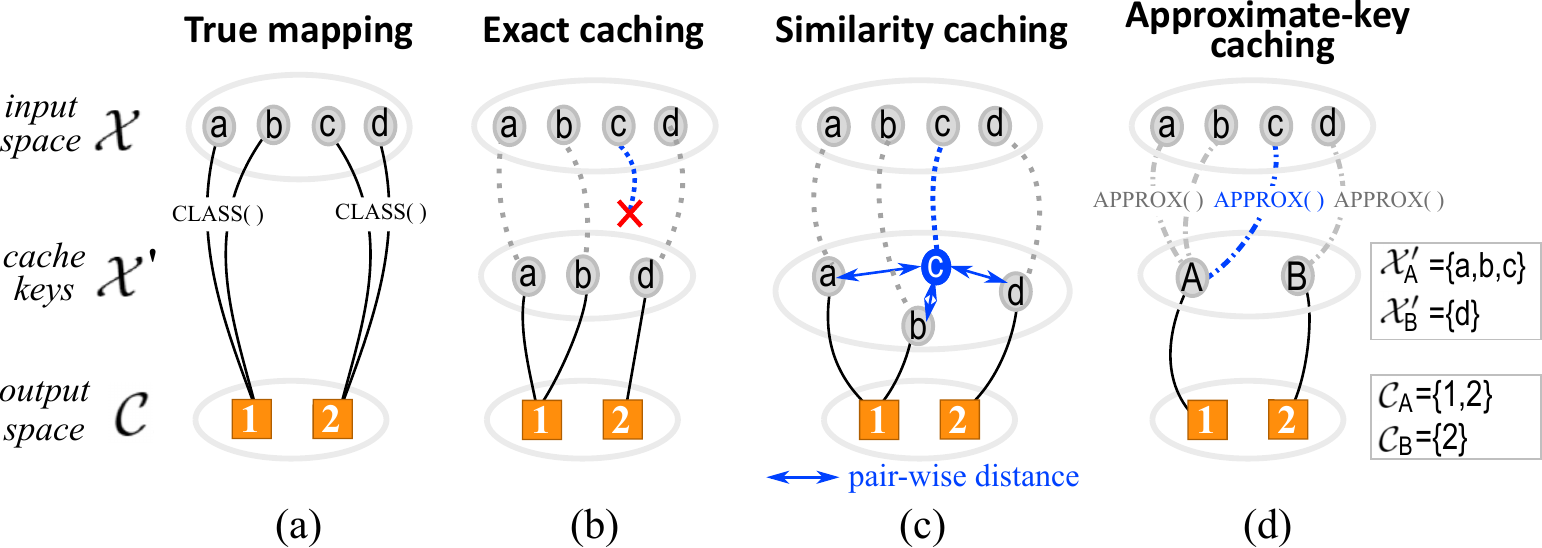}
    \caption{Synoptic of Exact, Similarity, and Approximate-key caching schemes.}
    \label{fig:exact-approx-synoptic}
\end{figure}

Typical $\CLASS(\cdot)$ functions are both memory hungry and computationally intensive. For instance, DL traffic classification models \cite{aceto-tma18} require 300k-6M weights, while the inference duration of smaller state-of-the-art architectures optimized for real-time is of the order of 150 to 250ms~\cite{mobilenet-corr17}.
Thus, for DL-based network traffic processing, reducing the computational complexity is key to meeting peak demand. Rather than reducing complexity by simplifying the $\CLASS(\cdot)$ function (sacrificing classification accuracy), in this paper we explore the orthogonal and general approach of avoiding redundant computations via caching. 




\subsection{Exact caching}
\label{sec:exact}
If the input space $\mathcal{X}$ is discrete, one can employ \emph{exact caching} to store key-value pairs $(x,y)$ while retaining the deterministic nature of the classification. If an input key $x$ is stored in the cache, its value $y$ can be returned immediately. As long as the cache hit rate is sufficiently high ~\cite{clipper-NSDI17,laser-WSDM14}, we can significantly reduce the average response time 
(e.g., sub-microsecond for DRAM access vs hundreds of milliseconds for DL inference) and save computational resources for processing less popular input keys.
We exemplify this in Fig.~\ref{fig:exact-approx-synoptic}-(b) where the cached $\{(a,1), (b,1), (d,2)\}$ pairs avoid invoking $\CLASS(\cdot)$ for $\{a,b,d\}$. 

Exact caching is simple to implement, and the widely used \emph{Least Recently Used} (LRU) replacement policy is known to offer good performance in practice: if an input $x$ is not in the cache, its value $y= \CLASS(x)$ is retrieved and the key-value pair $(x,y)$ is cached; to make room, the least recently accessed cached pair is evicted. 
We are aware that the cache literature is ripe with alternative policies that can achieve a more favorable tradeoff between hit rate, algorithm complexity, and reactivity to changes in popularity~\cite{martina14infocom}. Rather than exploring this space, we consider instead a hypothetical \emph{ideal caching} policy that, in a cache of capacity $K$, caches precisely the $K$ most popular key-value pairs. As we shall see, this facilitates the study of alternative methods to exact caching. However, even ideal exact caching is ineffective for many classification tasks. In fact, the input space $\mathcal{X}$ is commonly too large, and for any reasonable cache capacity $K$, the fraction of inputs that match a cache entry is too small to offer a relevant hit rate (Sec. \ref{sec:evaluation}).

%

\subsection{Similarity caching}
\label{sec:similarity}
To circumvent the issues of exact caching, many authors have proposed to answer cache lookups with approximate results,  trading off precision for a higher hit rate. Among a number of application-specific approaches, the more generic notion of \emph{similarity caching} has emerged (also known as ``metric caching''~\cite{falchi-LSDS08}, and ``nearest neighbor caching''~\cite{pandey-WWW09}), notably for applications that use \emph{approximate similarity search}. 

Approximate  similarity  search~\cite{simsearch-survey01,simsearch-survey05,simsearch-survey06} builds on the notion that if two points $x_1$ and $x_2$ are sufficiently close to each other, 
$y_1$ and $y_2$ will also be close. 
The principle is to respond to an input query $x_1$ with a cached value $y_2$ where the key $x_2$ associated with $y_2$ is similar but not necessarily identical to $x_1$, i.e., when a metric $\textsc{dist}(x_1, x_2)$ is less than a given \emph{distance threshold} $\epsilon$. The appropriate definition of  $\textsc{dist}(x_1, x_2)$ and the choice of threshold $\epsilon$ clearly depend on the particular use case, and both contribute to the tradeoff between cache hit rate 
and response accuracy. This commonly results in implementing a k Nearest Neighbor (kNN) strategy.
For instance, in Fig.~\ref{fig:exact-approx-synoptic}-(c) the closest cached input to $c$ is $b$, which generates an error by returning $1$ rather than~$2$; alternatively, by setting a very small $\epsilon$, a lookup for $b$ would generate a cache miss.

While similarity caching is an appealing concept, it presents downsides in our settings.
First, similarity caching assumes that the closeness of two inputs also implies that their respective output values are close -- otherwise stated, it assumes that objects in the neighborhood of a cached input offer a \emph{low approximation error} for a variety of other inputs. As such, similarity caching is particularly useful for searching, 
i.e., when the value associated with a key is not unique but rather a set of possibilities. However, mapping this scenario to a classification task requires ingenuity, as it is not obvious how to define an appropriate similarity distance, or how to tune the threshold $\epsilon$ to balance hit rate against accuracy.

Second, the lookup function performed in similarity caching is much more complex than an exact-match lookup:
finding one or more nearest neighbors to an input $x$  may be particularly time-consuming when the input space $\mathcal{X}$ is large, even with state-of-the-art algorithms such as ball trees, k-d trees, and Locality Sensitive Hashing (LSH)~\cite{knn-book}.

Third, it is necessary to define a replacement policy for similarity caching that ensures that the cache is populated with key-value pairs adequately covering the input space. The selection of which cached pair is least useful and must be evicted is significantly more complex (as it triggers an update of the ball tree, LSH, or other data structure used) than the LRU policy for exact caching~\cite{falchi-LSDS08,garetto20infocom}.

Last but not least, whereas similarity caching introduces classification errors, it does not provide any direct means to correct them. In fact, while the $\epsilon$ distance threshold controls the similarity between different inputs, there is typically no formal verification of the cached $(x,y)$ pairs. In other words, errors are only ``indirectly'' controlled by LRU-like eviction mechanisms.

To address the limitations of exact caching and similarity caching we have devised an alternative form of approximate caching that we term \emph{approximate-key caching}.

\section{Approximate-key caching with error control}
\label{sec:approxkey}

We first introduce approximate-key caching (Sec. \ref{sec:AKC}) before explaining how the error rate can be controlled with our proposed \emph{auto-refresh} algorithm (Sec. \ref{sec:autorefresh_algo}). We contrast approximate-key caching with exact and similarity caching in Table \ref{tab:design-compare} and in the toy example in Fig.~\ref{fig:exact-approx-synoptic}-(d).

\subsection{Approximate-key caching}
\label{sec:AKC}
\begin{table}[t]
\centering
\caption{Taxonomy of the cache design space}
\label{tab:design-compare}
\begin{tabular}{cccccc}
\toprule
\bf Caching &
\bf Key &
\bf Match &
\bf Hit rate &
\bf Cost &
\bf Err control
\\
\midrule
exact               & original     & exact      &low &low & not needed\\
similarity          & original     & approx   &high &high & indirect\\
approx-key  & approx  & exact   &high &low & direct\\
\bottomrule
\end{tabular}
\end{table}

Approximate-key caching transforms the input space ($i$) to significantly reduce the size of the original input space $\mathcal{X}$ that makes exact caching impractical due to a very low hit rate, and ($ii$) to enable the use of exact matching in the transformed space. The transformation should ideally increase the key popularity skew to improve the hit rate. More formally, instead of caching inputs $x$$\in$$\mathcal{X}$ (as happens in similarity caching), we cache approximate inputs $x'=\APPROX(x) \in \mathcal{X'}$ where $|\mathcal{X'}| \ll |\mathcal{X}|$.  

Fig.~\ref{fig:method_sketch}-(right) illustrates some examples of $\APPROX(\cdot)$ functions on a time series input $x$ having six integer elements:
\begin{itemize}
\item \emph{prefix$_3$} (\emph{suffix$_3$})  takes the first (last) 3 elements;
\item  \emph{every$_2$} (\emph{maxpool$_2$}) takes every second element (the max between two consecutive elements);
\item  \emph{quantize$_{10}$}, rounds elements to the nearest multiple of 10.  
\end{itemize}
Clearly, other $\APPROX(x)$ functions can be defined, including combinations of some of the above. The choice depends on the precise nature of the considered use case. These transformations make caching more flexible with respect to similarity caching where the cache keys are kept unmodified.

When an approximate key $x'$ is cached, it remains to define the associated value $y'$. One possible approach is to store the value inferred by the $\CLASS(\cdot)$ function whenever an approximate key is inserted into the cache: if $x'=\APPROX(x)$ is not in the cache, then a classification inference is triggered and the resulting key-value pair $(x',y')$ is cached, where $x'=\APPROX(x)$ and $y'=\CLASS(x)$. 
Subsequently, as long as $x'$ remains in the cache, every input key that is approximated by $x'$ will find the stored value $y'$ with a \emph{simple exact match} operation. 
The replacement policy might be any of those applicable to exact caching, including LRU and the hypothetical ideal policy discussed in Sec.~\ref{sec:exact}.

This process is not immune to errors since not all $x$ mapping to the same $x'$ share the same class. For instance, in Fig.~\ref{fig:exact-approx-synoptic}-(d) the approximate-key $A$ maps to the inputs $\mathcal{X}_A=\{a,b,c\}$ which have different classes $\mathcal{C}_A=\{1,2\}$; the approximate-key $B$ instead does not generate mismatches.

Approximate-key caching, as defined above, may be satisfactory if popular approximate keys (like $B$ in the illustration) correctly represent the class of all, or a very large fraction of the inputs mapping to them -- i.e., the popular approximate keys present a \emph{dominant class}. However, this can hardly be guaranteed, so we introduce a \emph{direct} verification of the $(x',y')$ mappings. The purpose of such a mechanism is to ensure that the cache preferentially stores key-value pairs that yield no error (like $B,$ class 1), while cases with mismatches (as $\mathcal{X}'_A$) require multiple verifications with $\CLASS(\cdot)$.

\begin{figure}[!t]
\centering
\includegraphics[width=\columnwidth]{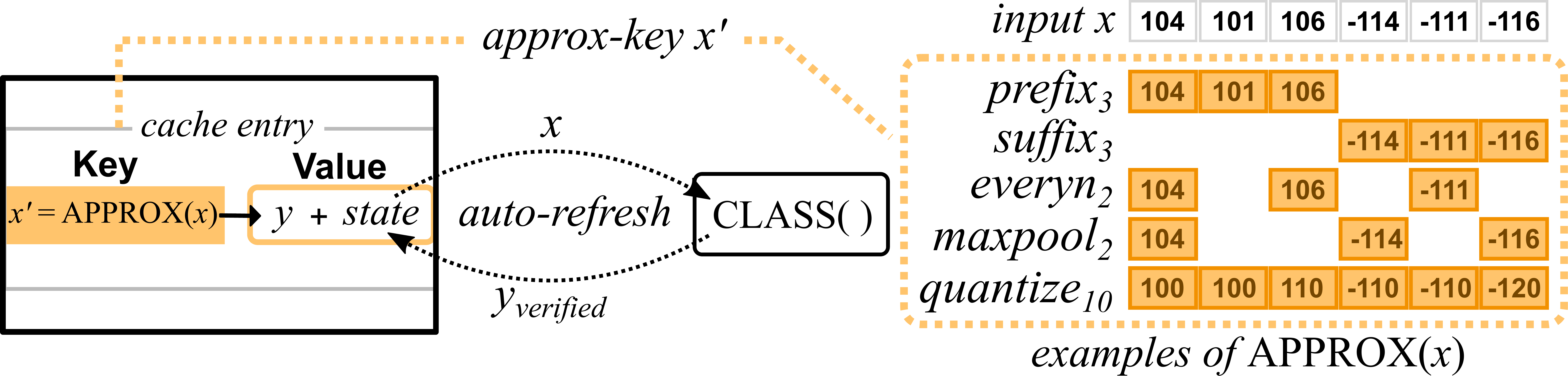}
\caption{Approximate-key caching system overview. 
\label{fig:method_sketch}
}
\end{figure}

\subsection{Error control via auto-refresh}\label{sec:autorefresh_algo}

\newcommand{\algocomment}[1]{\textcolor{blue}{#1}}
\newcommand{\algrule}[1][.2pt]{\par\vskip.5\baselineskip\hrule height #1\par\vskip.5\baselineskip}


\begin{algorithm}[t]
\footnotesize

\SetAlFnt{\small}
\SetArgSty{textnormal}
\SetAlgoNoLine

\KwIn{cache object, input $x$, back-off $\beta$;}
\KwOut{$y$ class;}
\algrule

$x' = \APPROX(x)$                             
\algocomment{\hspace{50pt} \# compute approximate-key (fast)}
\\
($y$, $state$) = cache.lookup($x'$)      
\algocomment{\hspace{9pt} \# exact matching  (fast)}
\\
\uIf
    (\algocomment{\hspace{46.8pt} \# miss: add new entry})
    {$y$ is \texttt{null}}
    {                           
        $y = \CLASS(x)$                   
        \algocomment{\hspace{43pt} \# inference  (slow)} 
        \\
        $state.to\_serve$ = 0
        \\
        $state.refreshed$ = 1 
        \\
        cache.add($x'$, ($y$, $state$))                
    }
\uElseIf
    (\algocomment{\hspace{2.5pt} \# hit (no refresh)})
    {$state.to\_serve$ > 0}
    {
        $state.to\_serve$ -= 1
    }
\uElse
    (\algocomment{\hspace{95pt} \# hit (refresh needed)})
    {
        $y_{verify}$ = \CLASS(x)
        \algocomment{\hspace{21pt} \# inference (slow)}
        \\
        \uIf
            (\algocomment{\hspace{18pt} \# no conflict: increase back-off})
            { $y_{verify}$ == $y$} 
            {
                $state.to\_serve$ = floor(pow($\beta$, $state.refreshed$)) \\
                $state.refreshed$ += 1 \\
            }
        \uElse
            (\algocomment{\hspace{79pt} \# conflict: reset ($y$, state)}) 
            {
                $y$ = $y_{verify}$ \\
                $state.to\_serve$ = 0 \\
                $state.refreshed$ = 1 \\
                cache.update($x'$, ($y$, $state$))
            }
    }
\textbf{return} y
\caption{Auto-refresh error correction}
\label{alg:approx-autorefresh}
\end{algorithm}

We named our error control algorithm \emph{auto-refresh} and Fig.~\ref{fig:method_sketch}-(left) sketches our full-fledged approximate-key caching system. Each cache value is composed of the class $y$ and a \emph{state} used to track the validity of the mapping ($x'$, $y'$). The mechanism verifies the accuracy of the mapping for selected inputs and updates the value if there is a mismatch. If the verification is successful, the interval before the next verification is increased exponentially. The pseudocode of the algorithm is presented in Algorithm \ref{alg:approx-autorefresh}.

When an input $x$ is processed, the algorithm first computes approximate-key $x'$ and does a cache lookup (with exact matching) to obtaining the related value ($y$, $state$). If there is a miss, inference $y=\CLASS(x)$ is run and the pair $(x',(y, state))$ is added to the cache with a reset \emph{state} (lines 3-7). 
The counter $state.to\_serve$ defines how many lookups can be served before the next refresh, while $state.refreshed$ counts the number of refreshes triggered so far.

If $x'$ is in the cache (hit), two cases are possible. When no refresh is needed, the state is simply updated (line 8-9). Otherwise, a verification takes place invoking $\CLASS(\cdot)$: if the current and verified classes are consistent, the number of lookups that can be served without refresh is exponentially increased (with base $\beta > 1$); if the classes mismatch, $y_{verify}$ replaces $y$ and the $state$ is reset (lines 11-19).

Otherwise stated, the algorithm verifies every input after a new cache update 
until the number of inputs $n$, including the first, exceeds $\beta^{n-1}$. This means that the verification is initially frequent, especially if $\beta$ is small, and then becomes exponentially rarer as long as repeated matches confirm the stored class 
(i.e., the entry has a dominant class). This intuitively realizes the objective stated at the end of the previous subsection: the cache avoids costly inference for approximate keys with a low probability of error (like $B$ in Fig. \ref{fig:exact-approx-synoptic}-(d)), while approximate keys with a higher probability of error (like $A$ in Fig. \ref{fig:exact-approx-synoptic}-(d)) are subject to frequent verification. In the next section, we formally analyze these benefits.

\section{Analytical model}\label{sec:model}

We first analyze the performance of approximate-key caching without error control (Sec. \ref{sec:model:error}), and then extend the model to account for the auto-refresh mechanism (Sec. \ref{sec:model:autorefresh}). 

\subsection{Approximate-key caching without error control} 
\label{sec:model:error}
 
To formalize the discussion on errors mentioned above we first introduce some notation. It is convenient to refer to  $\mathcal{X'}$ members as $x'_i$, for $1 \le i \le |\mathcal{X'}|$. The relative $x'_i$ popularity is denoted $q_i$, i.e., the probability that an arbitrary input 
has approximate-key $x'_i$. We have $\sum_i q_i = 1$ and, without loss of generality, we order the $x'_i$ by their decreasing popularity, i.e., $q_i \le q_j $ if $i<j$. We shall assume the input stream follows the Independent Reference Model (IRM), a common assumption in the cache performance analysis literature, where the probability $q_i$ 
is independent of the order of $x$ in the input stream.

For \textit{LRU caching}, the characteristic time approximation yields the cache hit rate $H^{(\textsc{lru})}$ of a cache of capacity $K$ \cite{Fricker2012}. We hence have that,
\begin{equation}
H^{(\textsc{lru})} = \sum_{1\le i \le |\mathcal{X}'|} q_i (1-e^{-q_i t_c}),
\end{equation}
where the characteristic time $t_c$ solves the equation,
\begin{equation}
\sum_{i} (1-e^{-q_i t_c}) = K.
\end{equation}
The hit rate for inputs with approximate-key $x'_i$ is then $h_i = 1-e^{-q_i t_c}$.

An \textit{ideal cache} permanently stores the $K$ most popular approximate keys and therefore realizes a hit rate $h_i=1$, for $1\le i \le K$, and $h_i=0$, otherwise. The overall hit rate is, 
\begin{equation}
H^{(\text{ideal})} = \sum_{1\le i \le K} q_i. 
\label{eq:Hideal}
\end{equation}
 
For the subset of inputs with approximate-key $x'_i$, the true class belongs to a set $\mathcal{C}_i$, with members denoted $y_{ij}$ for $1 \le j \le m_i$ with $m_i=|\mathcal{C}_i|$. In extension of the IRM, we assume the class of an arbitrary input in this subset is $y_{ij}$ with probability $p_{ij}$, independently of its position in the input stream, with $\sum_j p_{ij}=1$.

Without error control, the class associated with a cached approximate-key $x'_i$ is that of the input that led to the cache insertion. Under the above independence assumptions, the probability an incorrect class is returned for this key is,
\begin{equation}
e_i = \sum_j p_{ij}(1-p_{ij}) = 1 - \sum_j p_{ij}^2.
\label{eq:nocontrol}
\end{equation}
The overall LRU and ideal caching error rate without correction are thus,
\begin{equation}
E_{nc}^{\text{(LRU)}} = \sum_i q_i h_i e_i, \quad E_{nc}^{\text{(ideal)}} = \sum_{i\le K}  q_i e_i.
\end{equation}
It is clear from \eqref{eq:nocontrol} that the error rate $e_i$ will be small if $\max_j \{p_{ij}\}$ is close to 1 (i.e., where there is a dominant class $y_{ij}$ mapping to the majority of inputs $x$ having approximate-key $x'_i$)
and rather high when the true class can take several values with similar, small probabilities (e.g., when all labels are equally likely $p_{ij}=1/m_i$ for all $j$, then $e_i = 1-1/m_i$).
The auto-refresh algorithm is designed to preferentially rely on the cached value of dominant labels while  resorting to an actual inference $\CLASS(\cdot)$ when  $\max_j \{p_{ij}\}$ is small.

\subsection{Error control via auto-refresh}
\label{sec:model:autorefresh}

We continue the analysis focusing on the effects of the auto-refresh algorithm. We first consider an LRU cache of capacity $K$ and derive the fraction $r_i$ of inputs with approximate-key $x'_i$ that require DL inference (due to insertion or refresh), and the corresponding fraction $e_i$ of errors due to class mismatch. We then simplify the model by assuming the cache is ideal, leading to closed-form results of more intuitive interpretation. 

\subsubsection{LRU replacement}\label{sec:model:autorefresh:lru}
We consider sequences of input arrivals with approximate keys $x'_i$ that begin with a new cache insertion and terminate with the last arrival before either ($i$) an LRU eviction, or ($ii$) an auto-refresh update due to a class mismatch. We refer to such a sequence 
as a $j$-sequence.

It is convenient at this point to formulate the result of the Algorithm \ref{alg:approx-autorefresh} as follows. 
If there is no prior mismatch, the $n^{th}$ inference of the $j$-sequence occurs on input $\phi_n$, counting the initial identification on input 1 as the first inference ($\phi_1=1$), where
\begin{equation}
\phi_n =  \max\{n, \lfloor \beta^{n-1} \rfloor \}.
\label{eq:en}
\end{equation}
For instance, if $\beta=2$, inferences occur on inputs $2^{n-1}$ for $n\ge 1$ while if $\beta=1.5$, inferences occur on inputs  1, 2, 3, 5, 7, 11,$\cdots$. 


Let $P_j^{\mathrm{mm}}(a)$ be the probability a $j$-sequence is of length $a$, $a\ge 1$, and ends due to a mismatch. The sequence ends just before the $n^{th}$  $\CLASS(\cdot)$ inference where $\phi_n=a+1$ and $n\ge 2$. It counts $a+1$ hits (including the mismatch), $n-2$ matching  $\CLASS(\cdot)$ inferences (after the first), and one mismatch (on arrival $\phi_n$). By the independence assumptions and applying the characteristic time approximation, we deduce,
$$ P_j^{\mathrm{mm}}(a) = \left\{
\begin{array}{ll}
        (1-e^{-q_it_c})^{a+1}p_{ij}^{n-2} (1-p_{ij}), & \text{if } a = \phi_n - 1,\\
       0, & \text{otherwise}.
\end{array} \right.
$$
Let $P_j^{\mathrm{lru}}(a)$ be the probability of a $j$-sequence being of length $a$ and ending due to a cache eviction prior to arrival $a+1$. The sequence then has $a$ cache hits, and is followed by one cache miss. It contains $n-1$ matching  $\CLASS(\cdot)$ inferences with $\phi_n \le a < \phi_{n+1}$. By independence we have therefore,
$$ P_j^{\mathrm{lru}}(a) = (1-e^{-q_it_c})^{a} e^{-q_it_c} p_{ij}^{n-1}, $$
with $n$ such that $\phi_n\le a < \phi_{n+1}$.

Now consider the probabilities $\pi_j$ that an arbitrary sequence is a $j$-sequence for $1 \le j \le m_i$. By stationarity, the $\pi_j$ satisfy recurrence relations,
$$ \pi_j = \sum_{k\ne j}\pi_k \sum_{x\ge 1} P_k^{\mathrm{mm}}(x) p_{ij}/(1-p_{ik})  + \sum_{k\ge 1}\pi_k \sum_{x\ge 1}P_k^{\mathrm{lru}}(x) p_{ij},$$
that, with the normalization condition $\sum_{j \ge 1} \pi_j = 1$, can be solved numerically.

Overall, the probability an arbitrary sequence is of length $a$~is,
$$ P(a) = \sum_j \left(P_j^{\mathrm{mm}}(a) + P_j^{\mathrm{lru}}(a)\right) \pi_j.$$ 
From this distribution we can derive the required performance measure $r_i$ as follows. The number of $\CLASS(\cdot)$ inferences including the first in a sequence of length $a$ is $n(a)=n: \phi_n \le a <\phi_{n+1}$. The overall fraction of arrivals that requires inference is thus,
\begin{equation}
r_i = \frac{\sum_{a\ge 1} n(a) P(a) }{ \sum_{a\ge 1} a P(a)}.
\end{equation}

To compute the fraction of errors $e_i$, consider a $j$-sequence of length $a$.  The number of unverified arrivals is $a-n(a)$ (since there are $n(a)$ $\CLASS(\cdot)$ inferences). Each such arrival independently gives an error with probability $1-p_{ij}$ so that the expected number of errors in the sequence of $a$ arrivals is $(1-p_{ij}) (a-n(a))$. We deduce the overall expected fraction of errors,
\begin{equation} e_i = \frac{\sum_a \sum_j (1-p_{ij})(a-n(a)) \pi_j\left(P_j^{\mathrm{mm}}(a) + P_j^{\mathrm{lru}}(a)\right)}{  \sum_{a\ge 1} a P(a)}.
\end{equation}

The above formulas can be used to evaluate the auto-refresh algorithm for a traffic model specified by the distributions $\{q_i\}$ and $\{p_{ij}\}$. However, the evaluation is necessarily numerical and the formulas provide little insight into the impact on performance of the traffic model and the refresh parameter $\beta$. To gain further understanding, we now consider the simpler model of an ideal cache.

\subsubsection{Ideal cache}\label{sec:model:autorefresh:ideal}
An ideal cache contains just the $K$ most popular items. This can be realized approximately in practice using more sophisticated replacement policies than simple LRU, as discussed in \cite{martina14infocom}. However, we consider it here more as a useful abstraction that allows us to evaluate the impact of auto-refresh on the most popular items that are indeed always in cache with very high probability. In the ideal cache, the sequences defined above always end because of a mismatch and the formulas simplify. The $\CLASS(\cdot)$  inference fraction and the error fraction for approximate-key $x'_i$, with $i\le K$, are given by the following proposition. 

\begin{prop}
\label{prop:analysis}
The fraction $r_i$ of inputs with approximate-key $x'_i$ that are verified by a $\CLASS(\cdot$) inference is given by
\begin{equation}
r_i= \left\{
\begin{array}{ll}
       \frac{1}{ \sum_j\sum_{n\ge 2}(\phi_n - 1) (1-p_{ij})^2 p_{ij}^{n-1}}, & \text{if } \max_j\{p_{ij}\}<1/\beta ,\\
       0, & \text{otherwise}.
\end{array} \right.
\label{eq:lookups}
\end{equation}
The probability $e_i$ an input with approximate-key $x'_i$ is incorrectly classified is
\begin{equation}
e_i =  \left\{
\begin{array}{ll}
        \frac{ \sum_j\sum_{n\ge 2}(\phi_n-n) (1-p_{ij})^3 p_{ij}^{n-1}} { \sum_j\sum_{n\ge 2}(\phi_n - 1) (1-p_{ij})^2 p_{ij}^{n-1}}, & \text{if } \max_j\{p_{ij}\} < 1/\beta,\\
       1- \max_j\{p_{ij}\}, & \text{otherwise}.
\label{eq:errors}
\end{array} \right.
\end{equation}

\end{prop}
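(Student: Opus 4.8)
The plan is to specialize the general LRU analysis of Section~\ref{sec:model:autorefresh:lru} to the ideal cache, where eviction never occurs, and then simplify the resulting expressions using the renewal-reward structure of the $j$-sequences. First I would observe that in an ideal cache the items $x'_i$ with $i \le K$ are permanently resident, so $e^{-q_i t_c} \to 0$ and $(1-e^{-q_i t_c}) \to 1$: every $j$-sequence terminates because of a class mismatch, and the eviction term $P_j^{\mathrm{lru}}(a)$ vanishes identically. Thus $P_j^{\mathrm{mm}}(a)$ reduces to $p_{ij}^{n-2}(1-p_{ij})$ when $a = \phi_n - 1$ for some $n \ge 2$, and zero otherwise. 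The next step is to handle the two regimes separately. When $\max_j\{p_{ij}\} \ge 1/\beta$, there is a dominant label $j^\star$ with $p_{ij^\star} \ge 1/\beta$; because $\phi_n = \max\{n,\lfloor\beta^{n-1}\rfloor\}$ grows (eventually) like $\beta^{n-1}$ while the survival probability of a $j^\star$-sequence decays like $p_{ij^\star}^{\,n}$, the sequence is almost surely infinite (no mismatch ever forces a refresh after the transient), so $r_i = 0$ and the only errors are the unverified lookups against the dominant label, giving $e_i = 1 - \max_j\{p_{ij}\}$. This is the ``otherwise'' branch.

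For the interesting regime $\max_j\{p_{ij}\} < 1/\beta$, every $j$-sequence is almost surely finite, so the $\{\pi_j\}$ recurrence collapses: after a mismatch on a $j$-sequence the new label is drawn with probability $p_{ij}$ (the mismatching arrival has label $\ne k$, reweighted, but then a fresh insertion occurs with the unconditional $p_{ij}$), hence $\pi_j = p_{ij}$ in stationarity — this should be checked directly from the recurrence by noting $\sum_{x} P_k^{\mathrm{mm}}(x) = 1$ for all $k$ in this regime. Then I would compute the mean sequence length $\bar a = \sum_j \pi_j \sum_{n\ge 2}(\phi_n-1)\,p_{ij}^{n-2}(1-p_{ij}) = \sum_j \sum_{n\ge2}(\phi_n-1)(1-p_{ij})^2 p_{ij}^{n-1}$, using $\pi_j = p_{ij}$, and the mean number of inferences per sequence $\bar n = \sum_j \pi_j \sum_{n\ge 2} n\cdot p_{ij}^{n-2}(1-p_{ij})$. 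By the renewal-reward theorem, $r_i = \bar n / \bar a$, and one more simplification — using $\sum_{n\ge 2} n\,p^{n-2}(1-p) = \sum_{n\ge2}(n-1)p^{n-2}(1-p) + \sum_{n\ge2}p^{n-2}(1-p)$, and telescoping — shows $\bar n = 1$ per sequence, which yields exactly the claimed $r_i = 1/\bar a$. For the error fraction, each $j$-sequence of length $\phi_n-1$ contributes $(\phi_n - n)$ unverified arrivals (total arrivals $\phi_n - 1$ minus $n-1$ inferences after the first... careful bookkeeping: $\phi_n-1$ arrivals, $n-1$ of them being inferences, but the problem statement already says $a - n(a)$ unverified where $n(a)$ counts the inference on arrival~1 too — so I reconcile $a = \phi_n-1$, $n(a) = n-1$, giving $a - n(a) = \phi_n - n$), each independently erroneous with probability $1-p_{ij}$; dividing the expected error count $\sum_j \pi_j \sum_{n\ge2}(\phi_n-n)(1-p_{ij})\,p_{ij}^{n-2}(1-p_{ij})$ by $\bar a$ and substituting $\pi_j = p_{ij}$ gives the stated $e_i$.

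The main obstacle I anticipate is the bookkeeping around $\phi_n$, the off-by-one in the exponent of $p_{ij}$, and the identification of which arrival index counts as ``the $n$-th inference'' versus ``the $n$-th arrival since insertion'' — the LRU formulas in the preceding subsection carry several $\pm1$ shifts ($a = \phi_n - 1$, $n-2$ matching inferences after the first, etc.), and propagating these correctly through $\pi_j = p_{ij}$, $\bar a$, and $\bar n$ without dropping a factor of $p_{ij}$ or $(1-p_{ij})$ is where the argument is delicate. A secondary subtlety is rigorously justifying $\pi_j = p_{ij}$: one must argue that in the all-mismatch regime the label immediately following a reset is distributed as $p_{i\cdot}$ — this follows because the mismatching arrival itself triggers a miss-style reset whose new class is $\CLASS(x)$ on the very arrival that caused the mismatch, and that arrival's class is by assumption drawn from $\{p_{ij}\}$ independently — so the chain on labels is i.i.d.\ rather than merely Markov, making $\pi_j = p_{ij}$ immediate. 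I would also note in passing that the condition $\max_j\{p_{ij}\} < 1/\beta$ is exactly the condition for $\sum_{n} \phi_n\, p_{ij}^{\,n} < \infty$ (finite mean sequence length) for every $j$, which is why it is the natural dichotomy, and that the boundary case $\max_j\{p_{ij}\} = 1/\beta$ can be absorbed into the ``otherwise'' branch since $\lfloor\beta^{n-1}\rfloor = \lfloor p_{ij^\star}^{-(n-1)}\rfloor$ still grows fast enough to make the sequence infinite a.s.
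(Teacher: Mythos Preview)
Your overall strategy---specialize the LRU formulas to the ideal cache and apply a renewal-reward argument over $j$-sequences---is exactly what the paper does. The gap is in the stationary distribution $\pi_j$. Your claim $\pi_j = p_{ij}$ is incorrect: when a $k$-sequence terminates by mismatch, the new cached class is the class of the \emph{mismatching} arrival, which by definition is \emph{not} $k$, so the next sequence starts with label $j$ with probability $p_{ij}/(1-p_{ik})$, not $p_{ij}$. The chain of initial labels is genuinely Markov, not i.i.d. Solving $\pi_j = \sum_{k\ne j}\pi_k\, p_{ij}/(1-p_{ik})$ gives the paper's
\[
\pi_j \;=\; \frac{p_{ij}(1-p_{ij})}{\sum_k p_{ik}(1-p_{ik})},
\]
and it is precisely this extra factor $(1-p_{ij})$ that yields the $(1-p_{ij})^2$ in the proposition. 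Your own algebra reveals the problem: substituting $\pi_j = p_{ij}$ into $\sum_j\pi_j\sum_{n\ge 2}(\phi_n-1)p_{ij}^{n-2}(1-p_{ij})$ gives only one power of $(1-p_{ij})$, not the two you wrote down. A quick sanity check with $m_i=2$, $p_{i1}=0.8$, $p_{i2}=0.2$ shows your $\pi_1=0.8$ against the correct $\pi_1=0.5$; with the wrong $\pi_j$ neither $\bar a$ nor $\bar n$ is right, and their ratio is not the stated $r_i$.

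A second, smaller issue: in the regime $\max_j\{p_{ij}\}\ge 1/\beta$ you assert the $j^\star$-sequence is ``almost surely infinite''. It is not---each refresh still mismatches with probability $1-p_{ij^\star}>0$, so every sequence is a.s.\ finite. What is true is that the \emph{expected} sequence length diverges (the series $\sum_n \phi_n\, p_{ij^\star}^{\,n}$ fails to converge once $\beta\, p_{ij^\star}\ge 1$), which is why $r_i=0$ and why $e_i$ must be recovered as the limiting ratio of two divergent sums, as the paper does.
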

\begin{proof}
Let $P_j(a)$ be the probability a sequence is of length $a$, $a\ge 1$, given that it begins with fresh class $y_{ij}$. By the independence assumption, we have for $n\ge 2$,
$$ P_j(a) = \left\{
\begin{array}{ll}
        p_{ij}^{n-2} (1-p_{ij}), & \text{if } a = \phi_n - 1,\\
       0, & \text{otherwise}.
\end{array} \right.
$$
The recurrence relations for the probabilities an arbitrary sequence begins with the insertion of $y_{ij}$ become,
$$ \pi_j = \sum_{k\ne j} \pi_k(p_{ij} + p_{ik}p_{ij} + p_{ik}^2p_{ij} + \dots), $$
with the normalized solution,
$$ \pi_j = \frac{p_{ij}(1-p_{ij})} {\sum_k p_{ik}(1-p_{ik})}. $$

The fraction of sequences that starts with $y_{ij}$ and counts $a=\phi_n-1$ arrivals is $\pi_j P_j(\phi_n-1)$ for $n\ge 2$. In such a sequence, the number of inputs for which a $\CLASS(\cdot)$  inference is performed (including the first) is $n-1$. We deduce the overall proportion of inputs that perform a $\CLASS(\cdot)$  inference, 
$$r_i=\frac{\sum_j \sum_n (n-1) \pi_j P_j(\phi_n-1) }{ \sum_j \sum_n (\phi_n-1) \pi_j P_j(\phi_n-1)},$$ 
provided the series converge. It is easy to see that the numerator always converges and the denominator converges when $p_{ij} < 1/\beta$ for all $j$. If $\max_j\{p_{ij}\} \ge 1/\beta$ on the other hand, the denominator is dominated by terms proportional to $(\beta \max_j\{p_{ij}\})^n$ and goes to infinity yielding $r_i=0$. 

The number of inputs in the sequence that are not verified is $(\phi_n-n)$ and each of these brings an error with probability $(1-p_{ij})$. The overall fraction of errors is thus 
$$e_i=\frac{\sum_j \sum_n  (\phi_n-n) (1-p_{ij})\pi_j P_j(\phi_n-1)}{\sum_j \sum_n (\phi_n-1) \pi_j P_j(\phi_n-1)},$$ 
provided the series converge, i.e., the case when $p_{ij}$$<$$1/\beta$ for all $j$. If $\max_j\{p_{ij}\}$$\ge$$1/\beta$, numerator and denominator are dominated by terms proportional to $(\beta \max_j\{p_{ij}\})^n$ whose constant ratio is $1-\max_j\{p_{ij}\}$ yielding expression \eqref{eq:errors}.
\end{proof}

For the $K$ most popular approximate keys, the fraction of inputs requiring $\CLASS(\cdot)$  inference, namely the \emph{refresh rate}, is,
\begin{equation}
R^{(\text{ideal})} = \sum_{1\le i \le K} q_i r_i.
\label{eq:Lideal}
\end{equation}
The overall fraction of inputs requiring inference is then $R^{(\text{ideal})} + (1-H^{(\text{ideal})})$ where $H^{(\text{ideal})}$ is given by \eqref{eq:Hideal}. 
Hence, when the auto-refresh algorithm is used, the error rate for cached items is,
 \begin{equation}
E^{(\text{ideal})} = \sum_{1\le i \le K} q_i e_i,
\label{eq:Eideal}
\end{equation}
and this is the overall fraction of errors due to approximate-key caching since non-cached approximate keys are correctly classified.
\smallskip



\subsection{Impact of error control}\label{sec:model:summary}

Proposition \ref{prop:analysis} illustrates the desirable behavior of auto-refresh with error control: approximate keys with a dominant class yield few errors and rarely require verification while approximate keys with equally likely classes are, on the contrary, frequently verified. 


\subsubsection{Dominant class}
Formulas \eqref{eq:lookups} and \eqref{eq:errors} show that, if an approximate key $x'_i$ has a dominant class such that $\max_j\{p_{ij}\} > 1/\beta$, the fraction of inputs requiring $\CLASS(\cdot)$ inference is asymptotically zero while the error rate is bounded. We have, 
\begin{equation}
r_i = 0, \quad e_i \le 1-\frac{1}{\beta}.
\end{equation}
This shows how the choice of back-off rate in the auto-refresh algorithm trades off accuracy for classification throughput: the smaller $\beta$, the smaller the error, while the number of slower inferences performed via a \CLASS($\cdot$) call is greater. 


\subsubsection{No dominant class}
The worst scenario for an approximate key happens when there are multiple possible classes with equal probability, i.e.,  $p_{ij}=1/m_i$. For the particular case $\beta=2$, the proposition gives,
\begin{equation}
r_i=\frac{m_i-2}{m_i-1}, \quad e_i = \frac{1}{m_i}
\end{equation}
Clearly, when $m_i$ is large, auto-refresh hardly reduces the inference workload (asymptotically, $r_i \rightarrow 1$) and it would be better not to cache this approximate-key at all. On the other hand, the auto-refresh algorithm is still able to maintain a low error rate (asymptotically null, $e_i \rightarrow  0$). Such approximate keys are frequently verified and yield small errors 
which is indeed the desired behavior.

\section{Trace-driven evaluation\label{sec:evaluation}}

We have evaluated approximate-key caching using trace data relating to traffic classification. The presented results are for ideal caching 
for which we obtained closed formulas of easy interpretation.
We have verified by simulation on the same data that LRU behaves similarly but we do not report results here due to lack of space. 
We first introduce details of the use case (Sec. \ref{sec:usecase}) and describe the properties of the dataset (Sec. \ref{sec:dataset}).
We then dig into the auto-refresh performance
(Sec. \ref{sec:performance}) and finally compare approximate-key caching with state-of-the-art similarity caching (Sec. \ref{sec:comparison}).


\begin{figure*}[!t]
\begin{center}
    \subfigure[]{
        \includegraphics[width=0.8\columnwidth]{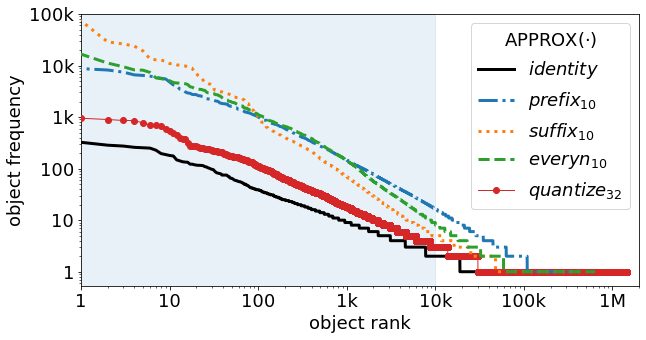}
    }
    \hfill
    \subfigure[]{
        \includegraphics[width=0.55\columnwidth]{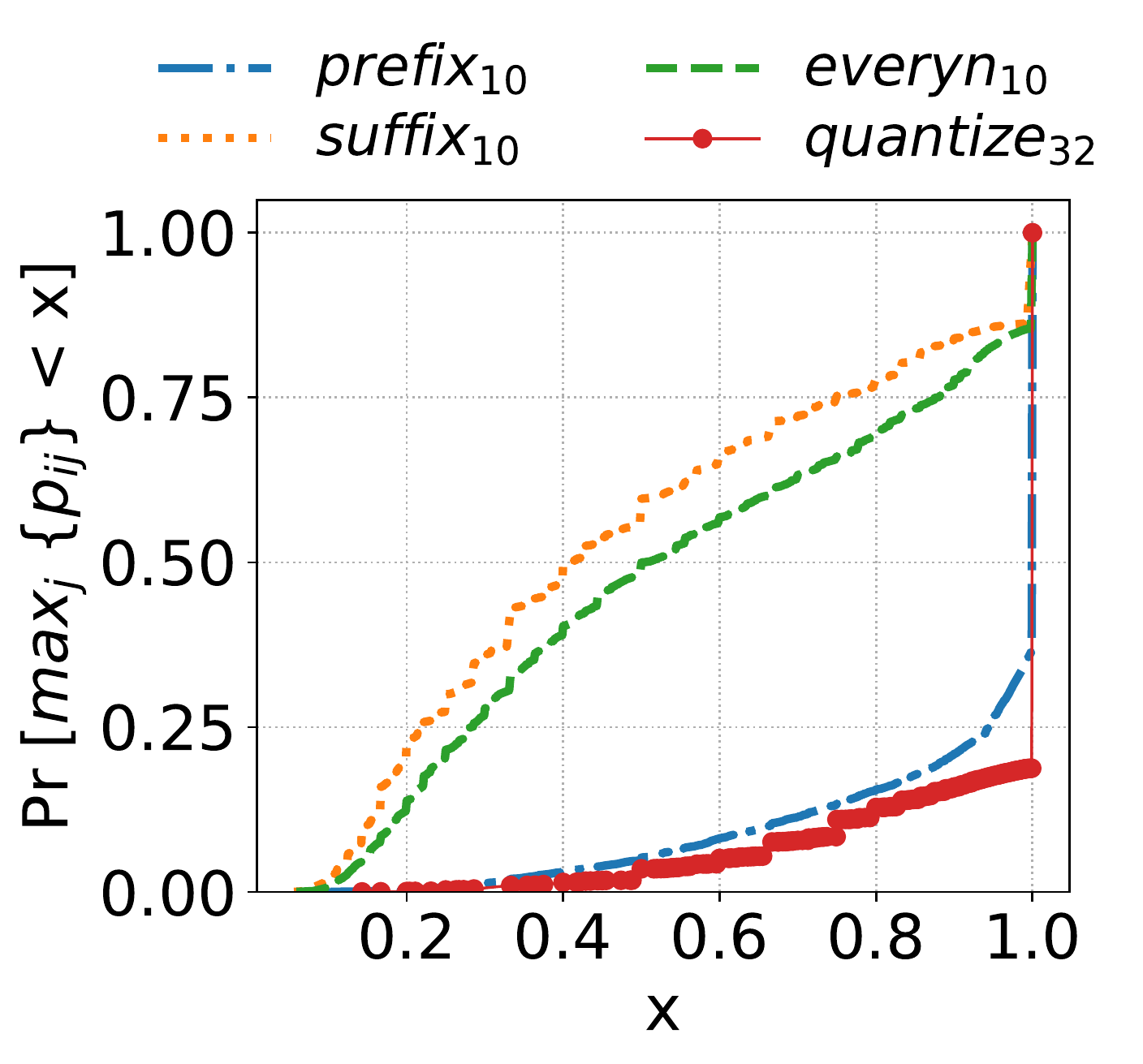}
    }
    \hfill
    \subfigure[]{
        \includegraphics[width=0.55\columnwidth]{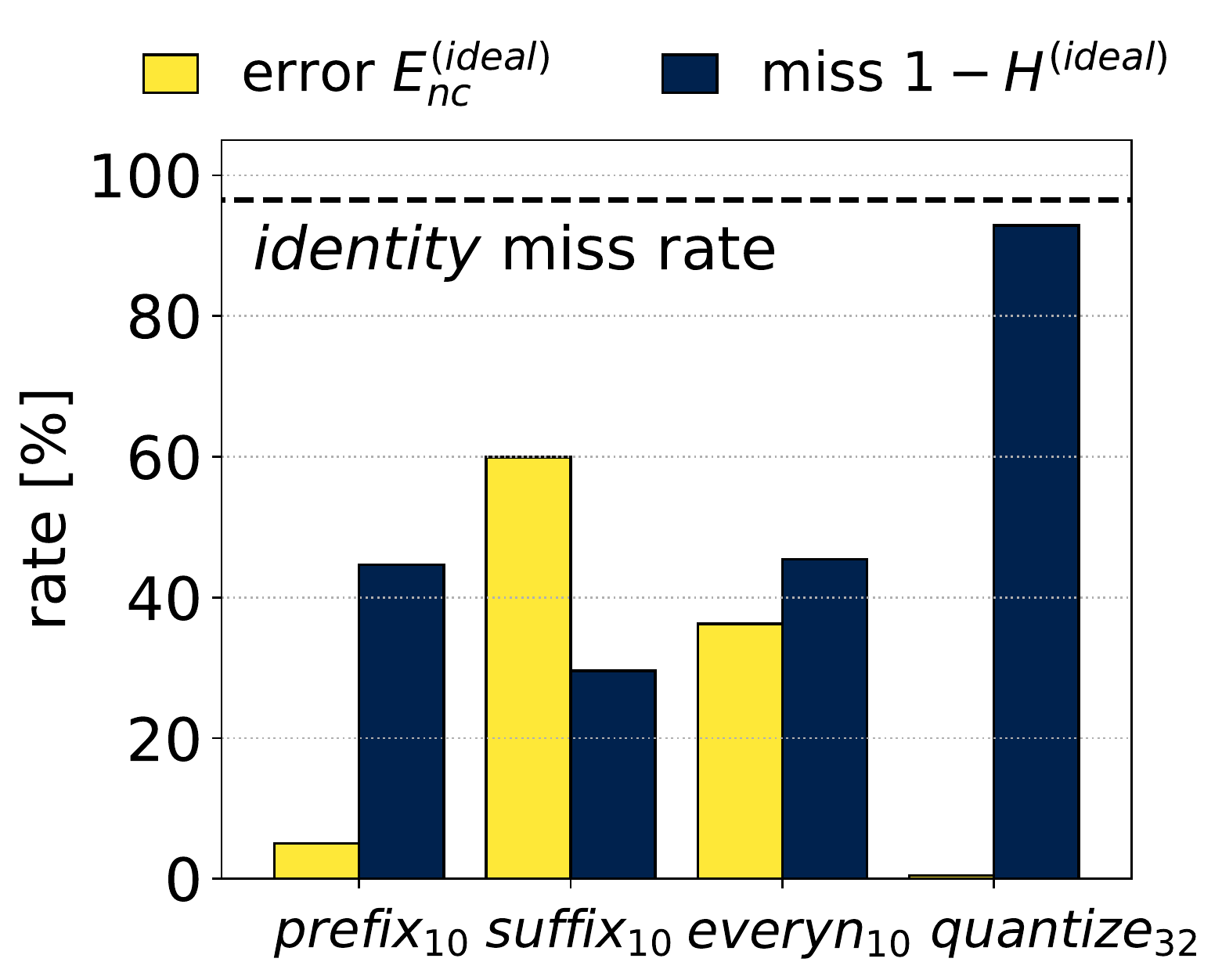}
    }
\caption{Dataset properties: impact of \APPROX($\cdot$) function on object popularity skew (a); dominant label prevalence $\max_j \{p_{ij}\}$ for top 10,000 objects (b); miss rate and error rate without auto-refresh for cache capacity $K=10,000$ (c). 
}
\label{fig:dataset}
\end{center}
\end{figure*}

\subsection{Traffic classification}\label{sec:usecase}
Traffic classification is the act of labeling a network flow with the application that generated it and is a well-known problem~\cite{nguyen08comst,boutaba18comprehensive,aceto-tma18}. 
In particular, \cite{aceto-tma18} compared several DL traffic classifiers with models having different architectures and sizes ranging from 300K to over 6M weights. While not explicitly reported, it is well known that a DL model inference is much slower (hundreds of milliseconds) than a DRAM cache lookup (sub-microsecond). Common off-the-shelf switches and other data-plane programmable hardware can only run very simple ML models~\cite{xiong19hotnets}. Traffic classification is therefore an interesting use case to test the applicability of approximate-key caching as it might help running DL models where there are limited processing resources.

DL traffic classifiers commonly use packet time series as input~\cite{aceto-tma18}, i.e., input $x$ 
is a vector of features (such as size in bytes and direction) of the first $N$ packets of a bi-directional flow.
Results discussed in~\cite{aceto-tma18} only pertain to the accuracy of the DL models. In this paper, we are instead interested in assessing the additional classification errors due to approximate-key caching. We, therefore, use a perfect classification oracle for the $\CLASS(\cdot$) function. 

\subsection{Dataset analysis}\label{sec:dataset}
 
 
We used a private, large-scale dataset,\footnote{We are investigating the possibility to release the anonymized dataset.} comprising over $1$M flows generated by over 76,000 devices. Each flow is represented as a time series of the first 100 packets size (in bytes) and direction (positive or negative) which is coupled with a label (generated from a DPI engine) specifying one among 200 application classes. We split the data into TCP and UDP traffic portions, each of which can be handled by a dedicated 1d Convolutional Neural Network (CNN)  model, with over 90\% accuracy\cite{doubleblind}. 
As TCP and UDP signatures are radically different, for simplicity we report results only for TCP traffic portion (which corresponds to 80\% of bytes and about half of the flows) although we point out results are similar for UDP.

To verify that our proposal works irrespective of the selected $\APPROX(\cdot$) function, we use a subset of the functions introduced earlier in Sec. \ref{sec:AKC}, namely, \emph{prefix}$_n$, \emph{suffix}$_n$, \emph{everyn}$_n$, \emph{quantize}$_n$, for selected values of $n$. 
The approximations significantly reduce the size of the input space (smaller vector dimension or smaller elements) and increase the popularity distribution skew while, of course, introducing undesirable errors, as evaluated below.  

\subsubsection{Popularity skew and dominant classes}
 Fig.~\ref{fig:dataset}-(a) shows the impact of selected $\APPROX(\cdot)$ functions on the popularity skew in the transformed space $\mathcal{X}'$ compared to the original space $\mathcal{X}$ (denoted ``identity'' function).  
It is evident, especially for  \emph{prefix$_{10}$}, \emph{suffix$_{10}$}, and \emph{everyn$_{10}$}, that the frequency of popular inputs increases significantly. This clearly improves the potential hit rate compared to that of the original trace.   


Fig.~\ref{fig:dataset}-(b) shows the probability distribution of $\max_j \{p_{ij}\}$ for the considered $\APPROX(\cdot)$ functions for the top 10,000 objects (shaded area in Fig.~\ref{fig:dataset}-(a)). We observe, especially for \emph{quantize$_{32}$} and \emph{prefix$_{10}$}, that there is a high proportion of approximate keys having a dominant label. As discussed in Sec. \ref{sec:model}, this suggests the resulting error will be small and that \emph{quantize$_{32}$} and \emph{prefix$_{10}$} are better $\APPROX(\cdot$) functions. 


\subsubsection{Hit rates and error rates}
We first assess the hit rate and error rate induced by $\APPROX(\cdot)$ alone, i.e., approximate-key caching without error correction.
For a cache size $K=10,000$ elements, Fig.~\ref{fig:dataset}-(c) shows that the increased popularity skew significantly reduces the fraction of inputs requiring $\CLASS(\cdot)$ inference, as captured by the miss rate $1-H^{\text{(ideal)}}$. The miss rate decreases from over 95\% for exact caching and \emph{quantize$_{32}$} to about 30\% for \emph{suffix$_{10}$} and to less than 50\% for \emph{prefix$_{10}$} and \emph{everyn$_{10}$}. Approximate-key caching can therefore potentially halve the number of $\CLASS(\cdot)$ inferences.

At the same time, approximate-key caching introduces errors whose rate depends on the $\APPROX(\cdot)$ function. Fig.~\ref{fig:dataset}-(c) also shows that the error rate ranges between 5\% (for \emph{prefix$_{10}$}) and 60\% (for \emph{suffix$_{10}$}). As per the previous analysis, we expect auto-refresh to compensate for such errors by more frequently verifying cached approximate keys mapping to multiple classes.  
Conversely, when such errors are small (e.g., for \emph{prefix$_{10}$} or \emph{quantize$_{32}$}), a few verification cycles will suffice to further reduce the error without noticeable effect on the hit rate.
If errors are frequent for the considered  $\APPROX(\cdot)$, this may offset the hit rate benefits as a verification essentially boils down to running an actual $\CLASS(\cdot)$ inference. In the worst case of a badly designed $\APPROX(\cdot)$ function, we expect  auto-refresh to correct cache mismatches for the erroneous prefix-label mappings in a seamless manner.



\begin{figure}
    \centering
    \includegraphics[width=0.87\columnwidth]{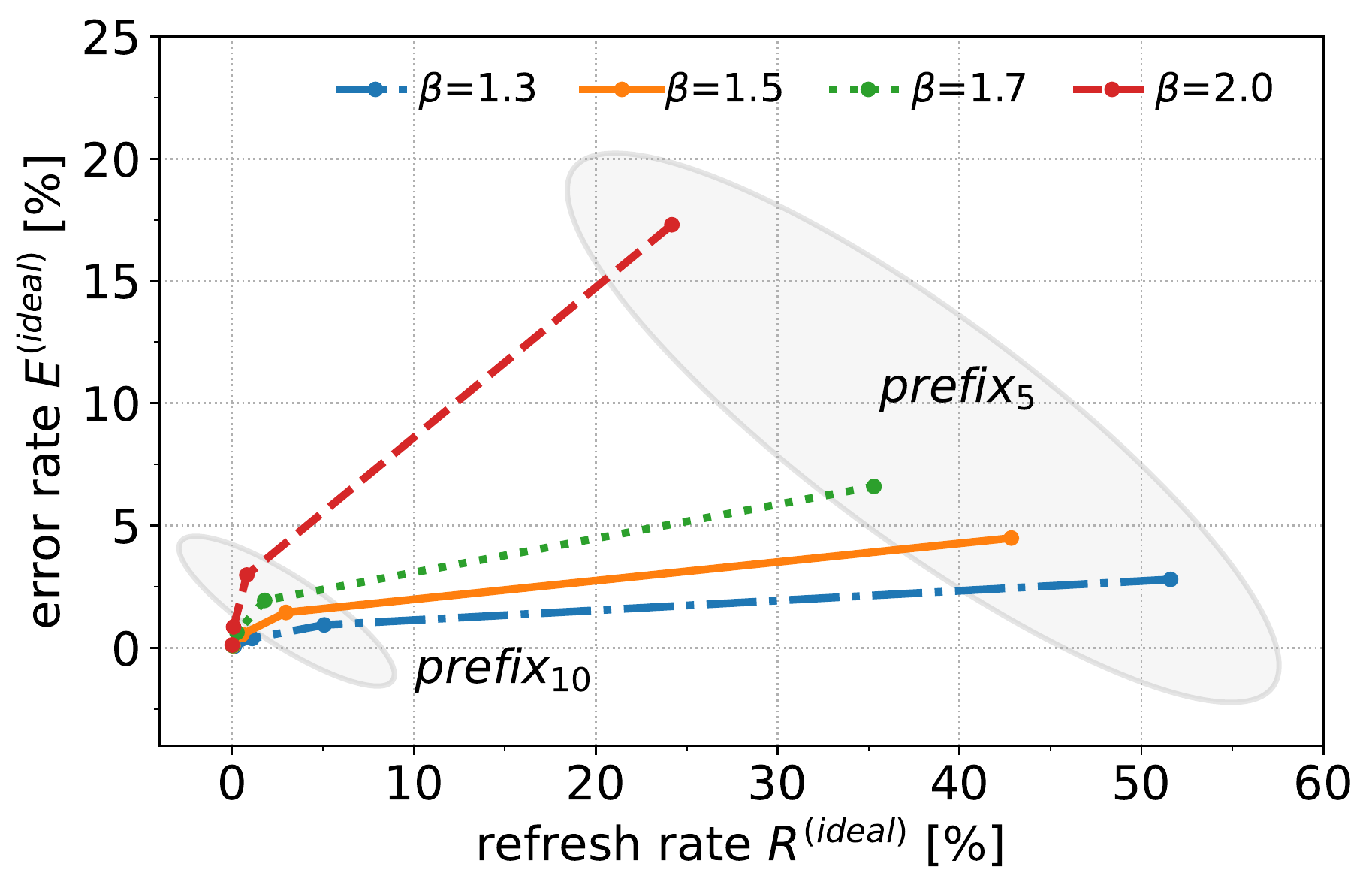}
    \caption{Auto-refresh performance: error-vs-refresh rates trade-off induced by different back-off values $\beta$ for the $\APPROX(\cdot)$=\emph{prefix$_n$} function family.
    \label{fig:autorefresh-backoff}}
 \end{figure}

\subsection{Auto-refresh performance}\label{sec:performance}
We explore auto-refresh performance ($i$) by varying the exponential back-off base $\beta$ and ($ii$) by varying the $\APPROX(\cdot)$ function. The cache size is set here to $K=10,000$ but other capacities have similar qualitative results. 

\subsubsection{Back-off parameter}
We focus here on the \emph{prefix$_n$} approximation function that was shown above to have a favorable hit rate vs error rate trade-off.  This trade-off in fact depends on the value of $n$: the popularity skew increases as $n$ gets smaller, bringing a higher hit rate, but this comes at the cost of a higher error rate since fewer approximate keys have a dominant class. When $n$ is large, on the other hand, the performance of \emph{prefix$_n$} tends to that of the \emph{identity} function.

Fig.~\ref{fig:autorefresh-backoff} depicts, for different \emph{prefix$_n$} function sizes $n$$\in$$\{5,10,20,50\}$, the refresh rate $R^{\text{(ideal)}}$ vs error rate $E^{\text{(ideal)}}$ trade-off realized with different back-off values $\beta$.
We observe that, while performance varies as a function of $n$ as expected (i.e., shorter prefixes lead to higher errors) the impact of the back-off is consistent (i.e., lines do not cross).
In the rather extreme case of $n=5$, the error rate without error correction amounts to 45\%. Notice that when $\beta=2$, the auto-refresh limits the error to about 17\% with a refresh rate of 25\% (i.e., one every four hits is verified); on the other hand, by setting $\beta=1.3$ it is possible to further halve the error to 8\%, but at the cost of an increased refresh rate of about 55\%.
The same qualitative trade-off holds for other settings (e.g., $n=10$ in the picture) allowing one to tune the auto-refresh performance through the choice of $\beta$ depending on the use case (e.g., whether errors are tolerable and $\CLASS(\cdot$) is the dominant cost or, on the contrary, the error needs to be bounded). 

While in this work we statically set $\beta$, we argue that $\beta$ could be tuned dynamically  to maintain the $\CLASS(\cdot$) inference rate below a target value. A similar load control objective is realized by ``model switching''~\cite{modelswithc-HOTCLOUD20} where, depending on the load, models with different complexity (thus different accuracy) are interchanged, i.e., under high demand, use models with lower accuracy but faster inference, and vice versa. We believe the present proposal to use approximate-key caching and dynamically vary $\beta$ will realize this objective more simply. We leave the evaluation of such mechanisms to future work.


\begin{figure}[!t]
    \centering
    \includegraphics[width=0.95\columnwidth]{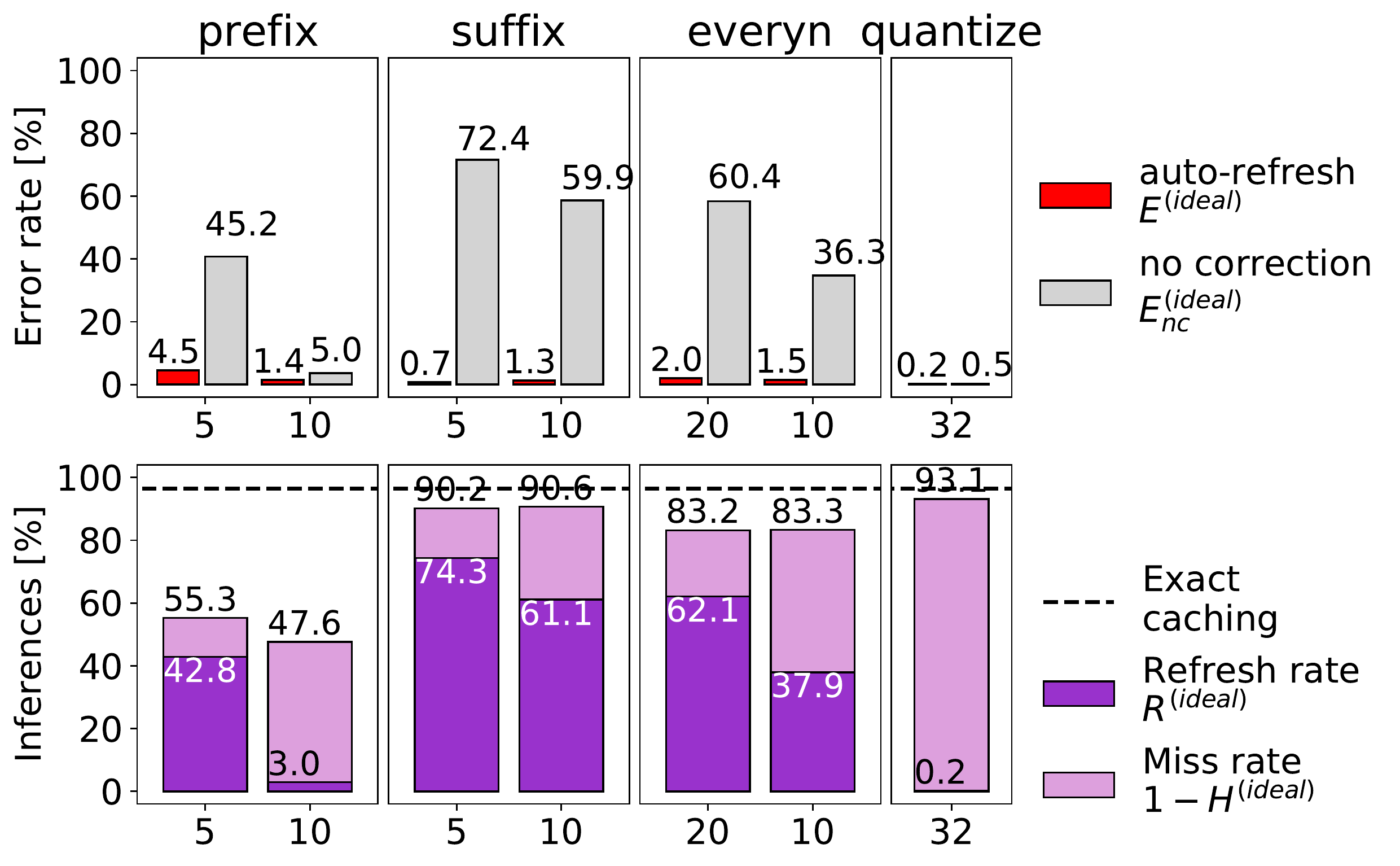}
    \caption{Auto-refresh performance:  overall costs-vs-benefits for all $\APPROX(\cdot)$ functions when $\beta=1.5$.}\label{fig:autorefresh-approx}
 \end{figure}

\subsubsection{Approximation functions}
We now fix the back-off rate to $\beta=1.5$ and assess the auto-refresh performance for a wide range of $\APPROX(\cdot)$ functions.
Fig.~\ref{fig:autorefresh-approx}-(top) compares the error  
with ($E^{\text{(ideal)}}$, red bars) and without ($E_{nc}^{\text{(ideal)}}$, grey bars) auto-refresh. 
Results show that the proposed mechanism is able to successfully correct even very large errors (in excess of 70\%), leading to a remarkably low error rate (generally a few percentage points). 

Clearly, error reduction comes at the cost of a higher refresh rate $R^{(\text{ideal})}$. Fig.~\ref{fig:autorefresh-approx}-(bottom) reports the overall $\CLASS(\cdot$) inference rate, by stacking the refresh rate $R^{(\text{ideal})}$ (verification of cache values, dark color) with the miss rate $1-H^{(\text{ideal})}$ (inferences for objects not in the top 10,000, light color). We observe that, with a few exceptions, the majority of the inferences are due to auto-refresh.
Notice that \emph{prefix$_n$} yields the best results, particularly when $n=10$  (half the miss rate with respect to exact caching, with only 3\% verification required to reach 1.4\% error), though the error correction mechanism yields satisfactory results even for admittedly sub-optimal $n=5$  settings ($\simeq$ half the miss rate, 42\% of verification to reach 4.5\% of error). This reinforces the generality of the approach and  alleviates the burden of designing and tuning the $\APPROX(\cdot)$ functions, as basic domain knowledge on the use case at hand suffices to obtain good results.

\subsection{Comparison to Similarity caching}\label{sec:comparison}

We here compare approximate-key caching with similarity caching focusing  on two aspects: ($i$) the computational complexity of the lookup operation and ($ii$) the accuracy of the returned results. We further include complexity results for exact caching as a reference.

\begin{figure}[!t]
    \centering
    \includegraphics[width=0.92\columnwidth]{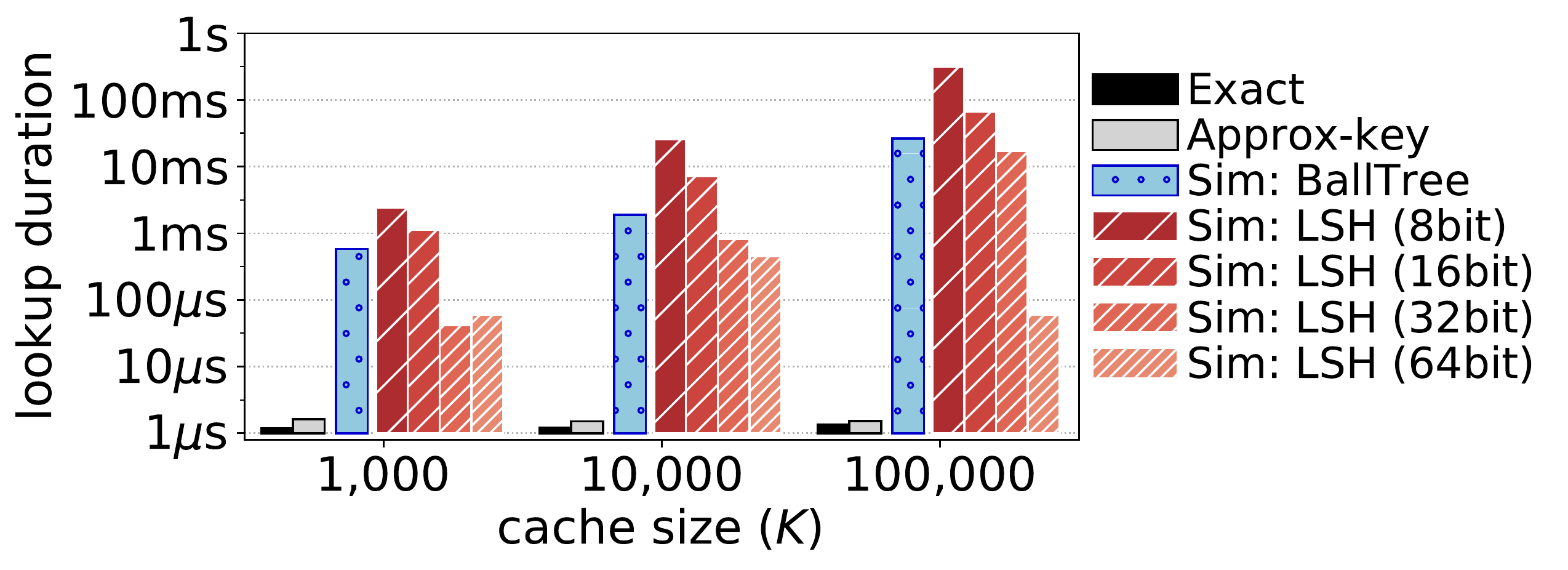}
    \includegraphics[width=0.92\columnwidth]{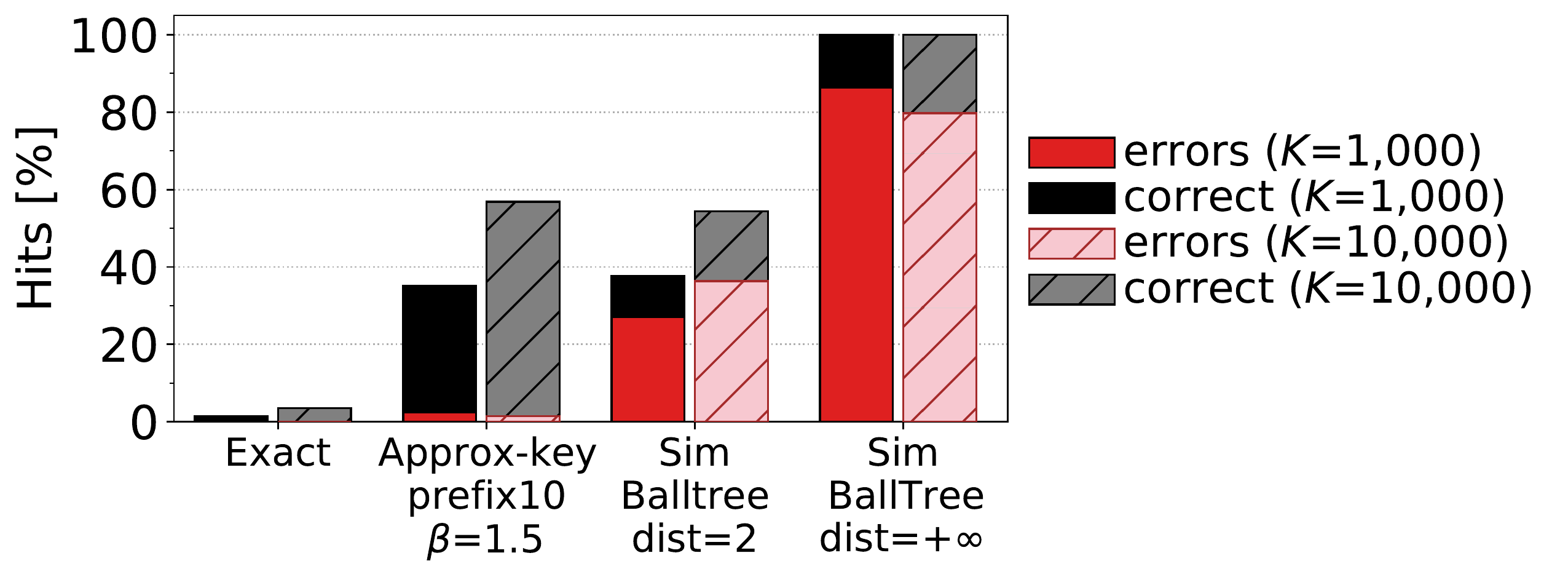}
    \caption{Comparison to similarity caching: lookup duration 
    for different implementations and settings (top) and 
    hit rate breakdown between errors and correct answers (bottom).}
    \label{fig:similarity}
\end{figure}

\subsubsection{Implementation details}

All evaluations are performed using \texttt{Python 3.7}. Specifically, for exact caching and approximate-key caching we resort to the native Python \emph{dictionary} (i.e., key-value paired hash tables) while implementing $\APPROX(\cdot)$ functions in Python.
For similarity caching, we consider two alternative state-of-the-art kNN methods~\cite{knn-book}: \emph{BallTree} as offered by the \texttt{scikit-learn} v0.23.2 library~\cite{sklearn-balltree}, and LSH as implemented by the \texttt{lshashpy3} v0.0.8 library~\cite{lshashpy3}. In a nutshell, a BallTree is a binary tree partitioning the search space based on pair-wise distances~\cite{pynndescent}; LSH instead partitions the search space using hash functions based on randomized Gaussian projections~\cite{slaney08lsh}. We note that the authors of these libraries have optimized their implementations by offloading complex operations to C libraries.\footnote{BallTrees use Cython~\cite{cython}, while \texttt{lshashpy3} uses \texttt{numpy}~\cite{numpy}.} We used the default parameters and Euclidean distance to train a BallTree, while we run LSH with a single hash table (more hash tables just degraded the performance). In both cases, we search for the 10 closest neighbors and apply majority voting to select the output label. We also tested alternative popular libraries like  \texttt{PyNNDescent}~\cite{pynndescent} and observed no performance differences. 


\subsubsection{Lookup duration}
Fig.\ref{fig:similarity}-(top)  depicts the average lookup duration for various cache sizes $K$$\in$$\{10^3,10^4,10^5\}$ and caching paradigms. We train BallTrees and LSH caches using the equivalent top-$K$ objects (we tested also with a random selection with no performance difference). We used \emph{prefix}$_{10}$ as $\APPROX(\cdot)$ function.
The evaluation was performed on a Linux server equipped with Intel Xeon Platinum 8164 CPUs @ 2.00GHz. While the precise numerical results are relevant only for this architecture, the \emph{relative} performance of the different paradigms remains relevant for alternative implementations.  

First, as expected, the cost of the approximate-key caching lookup is only marginally higher than exact caching lookup (due to the $\APPROX(\cdot)$ computation) with both remaining on the order of a microsecond for all explored cache sizes. Second, similarity caching lookup duration is between 2 to 5 orders of magnitude larger and the penalty with respect to approximate-key caching grows noticeably with the cache size $K$. Third, while LSH may provide a faster lookup than BallTree, its performance is sensitive to the settings used (number of bits for the hash size, and number of hash tables). Moreover, the lookup duration remains significantly higher for LSH than for approximate-key caching. The literature on this aspect is also divided. In fact, while LSH is often cited as the solution to speed up similarity caching~\cite{chierichetti09pods,garetto20infocom}, others have stated
``\emph{As we know, LSH
does not perform well on the ANN [approximate nearest neighbor] problem compared with the fancy optimization-based hashing methods. This is mainly
due to the random nature of its hash functions that are too
weak to capture the complex distribution of input features}''~\cite{ding18transaction}, and we remark that \texttt{scikit-learn} also removed LSH due to  speed concerns~\cite{sklearn-lsh-removed}.

Finally, note that a lookup duration of $\simeq$100ms starts being comparable with the $\CLASS(\cdot)$ inference duration that similarity caching is  meant to reduce, putting in question the usefulness of similarity caching for this use case.

\subsubsection{Accuracy of approximated answers} 
Fig.\ref{fig:similarity}-(bottom) shows how approximate-key caching enables better error control with respect to similarity caching. In particular, we break down hits between erroneous ones (red/pink bars) and correct ones (black/grey bars stacked on top).
Although we can tune a BallTree to have a hit rate comparable to that of approximate-key caching  (distance threshold $\epsilon$ equal to 2), more than 65\% of the hits are errors.  In contrast, approximate-key caching generates less than 2\% errors. Reducing the distance leads to fewer hits overall but does not reduce the proportion of errors. 
LSH yields similar performance results that are not reported in the figure for the sake of brevity.
In other words, similarity caching is prone to errors since classes cannot be easily separated in the input space. An accurate classifier indeed requires  a complex DL model.


\section{Related work}\label{sec:related}
Related work on approximate caching can be found in the areas of Deep Learning\cite{lcnn-CVPR17,deepcache-mobicom18,freezing-hotcloud19,clipper-NSDI17,laser-WSDM14,velox-CIDR15}, content retrieval \cite{virage-96,falchi-LSDS08,chierichetti09pods,isax-icdm10,imcom17,lucene-10,pandey-WWW09,li-WWW18} and networking\cite{softcachehits-jsac18,sensornetwork-12}, with either a system or a theoretical flavor.

\subsubsection{Application domain} 
Similarity caching has been applied to a variety of domains including image search~\cite{virage-96,falchi-LSDS08,isax-icdm10,imcom17}, text search~\cite{lucene-10}, ads recommendation~\cite{pandey-WWW09,li-WWW18}, multimedia~\cite{softcachehits-jsac18}, and sensor networks~\cite{sensornetwork-12}. Notice that none of these scenarios is related to classification tasks.
A number of proposals~\cite{lcnn-CVPR17,deepcache-mobicom18,freezing-hotcloud19} aim to accelerate DL inference by \emph{caching partial results} at intermediate feed-forward layers and are thus orthogonal to our work as they require acting on the inner mechanics of DL models. Conversely, several DL inference systems~\cite{clipper-NSDI17,laser-WSDM14,velox-CIDR15} simply use \emph{exact caching} and might greatly benefit from our proposal.  

\subsubsection{Similarity caching} 
Similarity caching uses a kNN search. From a practical viewpoint, while multiple surveys~\cite{simsearch-survey06,hydra-vldb19} overview the abundant literature on kNN algorithms and data structures, their computational cost remains
high~\cite{fastscan-vldb15,lshgpu-tobd21}
requiring hardware-specific acceleration or multi-core processing~\cite{sundaram13vldb}.
From a modeling viewpoint, 
the theory behind similarity caching is still in its infancy~\cite{chierichetti09pods,pavlos18jsac,garetto20infocom,sabnis21infocom}. In~\cite{chierichetti09pods}, the authors studied the approximation space though the similarity function they use is a heuristic. Authors of \cite{pavlos18jsac} instead frame similarity caching as an optimization problem but fail to provide a dynamic policy. More recently,
\cite{garetto20infocom} investigated the \emph{existence} of an optimal dynamic policy.
However, none of these works directly apply to classification. 
 
\subsubsection{Error control} 
To the best of our knowledge, no work exists that \emph{explicitly} tackles the issue of controlling the error in the approximation.
A few works~\cite{pandey-WWW09, garetto20infocom,li-WWW18,sabnis21infocom} discuss the use of cost functions to trigger cache entry eviction according to the similarity search outcome -- with somewhat complex means, e.g., a 
gradient boost regression tree to model the cost function\cite{li-WWW18} or with
gradient descent to discover the best set of cache entries~\cite{sabnis21infocom}. However, the error is not formally analyzed, nor explicitly controlled -- which is one of the major contributions of this work.

\section{Conclusion}\label{sec:conclusion}

This paper introduces approximate-key caching, a new caching paradigm for classification tasks that
retains the simplicity of exact caching, while increasing the cache hit rate by significantly reducing the size and skew of the input space. 
Additionally, approximate-key caching incorporates a novel auto-refresh algorithm that controls the impact of errors by explicitly verifying key-value mappings for selected input queries. The algorithm has been analytically modeled and thoroughly evaluated using trace data relating to a traffic classification use case. Overall, our work shows that approximate-key caching is robust (as the auto-refresh mechanism can significantly reduce even a very large rate of errors), simple (as cache lookups are orders of magnitude faster than for similarity caching), and effective (as it is easy to define $\APPROX(\cdot)$ functions that considerably reduce the number of classification inferences needed).


\bibliographystyle{IEEEtran}
\bibliography{biblio}

\end{document}